\pdfoutput=1 



\documentclass{article}
\pdfpagewidth=8.5in
\pdfpageheight=11in
\usepackage{ijcai19}

\usepackage{amssymb}
\usepackage{mathtools}
\usepackage{framed}
\usepackage{cases}
\usepackage[noend]{algpseudocode}  
\usepackage{multirow}
\usepackage{subfigure}

\usepackage{times}
\usepackage{soul}
\usepackage{url}
\usepackage{hyperref}
\usepackage[small]{caption}
\usepackage{graphicx}
\usepackage{amsmath}
\usepackage{amsthm}
\usepackage{booktabs}
\usepackage{algorithm}
\usepackage{color}
\urlstyle{same}





\title{Collective Mobile Sequential Recommendation: A Recommender System for Multiple Taxicabs}

\author{
	Tongwen Wu,
	Zizhen Zhang\footnote{Corresponding Author}, Yanzhi Li \And Jiahai Wang\\
	\affiliations
	School of Data and Computer Science, Sun Yat-sen University, China \emails
	wutw@mail2.sysu.edu.cn,
	zhangzzh7@mail.sysu.edu.cn
}


\begin{document}
	\maketitle
	\newtheorem{definition}{Definition} 
	\newtheorem{theorem}{Theorem}

\begin{abstract}
Mobile sequential recommendation was originally designed to find a promising route for a single taxicab. Directly applying it for multiple taxicabs may cause an excessive overlap of recommended routes. The multi-taxicab recommendation problem is challenging and has  been less studied. In this paper, we first formalize a collective mobile sequential recommendation problem based on a classic mathematical model, which characterizes time-varying influence among competing taxicabs. Next, we propose a new evaluation metric for a collection of taxicab routes aimed to minimize the sum of potential travel time. We then develop an efficient algorithm to calculate the metric and design a greedy recommendation method to approximate the solution. Finally, numerical experiments show the superiority of our methods. In trace-driven simulation, the set of routes recommended by our method significantly outperforms those obtained by conventional methods. 
\end{abstract}

\section{Introduction}
In smart cities, the prevalence of wireless sensors and communication infrastructure such as GPS, Wi-Fi and RFID makes large-scale trace data available. This enables us to mine useful knowledge of taxicab transporting systems and passenger patterns. The extracted knowledge in turn assists in designing intelligent strategies to increase taxicab drivers' profit and shorten passengers' waiting time.

In this paper, we study a mobile recommender system, which can provide a set of promising routes for a collection of taxicab drivers. The ultimate target is to systematically reduce the vacant taxicab's cruising time from a global perspective and in a real-time manner. There are some essential problems within the scope of mobile recommender system \cite{Zheng2010}. One interesting and practical problem is \textbf{mobile sequential recommendation (MSR)} \cite{Ge2010An}. In MSR, the locations where pick-up events occur are clustered into pick-up points. A route, i.e., a sequence of pick-up points, is to be recommended to a taxicab driver such that his expected cruising distance before having passengers is minimized.

	While the route recommendation for a single taxicab has been thoroughly considered  \cite{where-to-find-my-next-passenger,Yuan2013T,Zhang2012pCruise,Huang2014Backward,Yun2011An}, how to incorporate the influence among routes of multiple taxicabs and maximize the overall profit still remains challenging. 
	To this end, some methods have been proposed. 
	In \cite{Ge2010An}, by maintaining top $K$ routes in a buffer, the recommender system randomly chooses a route for each taxicab in the same area.
	\cite{Qu2014A} provided an improved top-K method which considers some correlation among routes.
	\cite{Qian2015SCRAM} introduced a route assignment mechanism aimed to achieve recommendation fairness for a group of taxicab drivers. 
	\cite{Ye2018} firstly formalized a multi-user mobile sequential recommendation problem in which the recommended routes are required to be disjoint.
	
	However, to the best of our knowledge, there is no existing study considering the time-varying competition among taxicab drivers and incorporating it into the multi-taxicab recommendation. For this challenge, we construct a more practical and complex MSR problem called \textbf{collective mobile sequential recommendation (CMSR)}, which targets at generating multiple interrelated routes for a group of taxicabs.

	Our work and contributions can be summarized as follows. Firstly, we use a classic mathematical model to characterize the time-varying pick-up probability, which considers the taxicab competitions and passenger arrival patterns. To be more specific, by modelling the passenger arrival pattern as a Poisson process, we generalize the pick-up probability of a pick-up point, instead of being a constant, into a function of the time interval between two consecutive taxicab arrivals. 
	Secondly, in order to maximize the profit globally, we introduce the sum of potential travel time for a collection of taxicabs as a metric to evaluate a multi-taxicab recommendation. As CMSR is a harder combinatorial problem than traditional MSR, it is computational more intensive to find an optimal/sub-optimal solution. After showing that evaluating a given recommendation in a straightforward manner requires exponential time, we propose an alternative Sequential Evaluation approach which requires a lower time complexity.
	Finally, we design a greedy approach to obtain an approximate solution, which performs very well in extensive numerical experiments and trace-driven simulation.
		
\section{Problem Formulation}
	
	
\subsection{Preliminary: MSR Formulation}
	
In literature, a classic MSR problem is described as follows. Let $\mathcal{C} = \{1,2,...,N\}$ be a set of central pick-up points and 0 be the initial position of a taxicab. Denote by $D(c,c^{'})$ the distance between two pick-up points $c$ and $c^{'}$, and $P(c)$ the estimated probability that a pick-up event occurs at a pick-up point $c$. A route $\overrightarrow{R} = (c_{1},c_{2},...,c_{L})$ is a directed sequence generated from a subset of $\mathcal{C}$ of length $L$. Note that the points in $\overrightarrow{R}$ are generally different from each other. All possible route sequences of size $L$ constitute a feasible solution space, denoted by $\mathbb{R}^{L}$. The expected cruising distance of a taxicab before picking up customers is recognized as the potential travel distance (PTD), which can be computed as follows:

\begin{equation}
	\small
	\mathrm{PTD}(\overrightarrow{R}) = \overrightarrow{d} \cdot \overrightarrow{p},
\label{equ:ptd}
\end{equation}
 where $\overrightarrow{d}$ is a distance vector and $\overrightarrow{p}$ is a probability vector. These vectors are given by:
\begin{equation}
	\small
	\begin{split}
		\overrightarrow{d} = (D(0,c_{1}),D(0,c_{1})+D(c_{1},c_{2}),...,
		D(0,c_{1})+\\
		\sum_{j=1}^{L-1}D(c_{j},c_{j+1}),D(0,c_{1})+\sum_{j=1}^{L-1}D(c_{j},c_{j+1})+Penalty)
	\end{split}
\end{equation}
\begin{equation}
	\small
	 	\overrightarrow{p} = (P(c_{1}),\overline{P(c_{1})}P(c_{2}),...,\prod_{j=1}^{L-1}\overline{P(c_{j})}P(c_{L}),\prod_{j=1}^{L}\overline{P(c_{j})})	
\end{equation}

In Equation (2), $Penalty$ represents a certain penalty distance of not picking up any passengers along the route. In Equation (3), $\overline{P(c)}=1-P(c)$. The MSR problem is to recommend an optimal driving route $\overrightarrow{R} (\overrightarrow{R} \in \mathbb{R}^{L}) $ such that the corresponding PTD is minimized.


\subsection{CMSR Formulation}

The classic MSR is aimed to recommend a route to each taxicab driver independently. However, in the real world, the recommendation should respond to requests from multiple users, namely, a CMSR problem. Directly applying MSR for each user will lead to  excessive overlaps among the recommended routes \cite{Ye2018}. To tackle CMSR, we need to consider the influence among taxicabs and recommend routes from a global perspective.

Consider a scenario that passengers arrive at a pick-up point randomly and wait for taxicabs to come. In the concerned time period, the passenger arrival pattern is modeled as a homogeneous Poisson process. Specifically, we can estimate a \textbf{passenger arrival rate} $\lambda_{c}$ for point $c$. Let $N(t)$ be the number of passenger arrival events in the time interval $[0,t]$ at point $c$. Then, the number of arrivals in time interval $[t,t+\Delta]$ follows a Poisson distribution as follows:
\begin{equation}
\small
	P\{N(t+\Delta)-N(t)=i\} = \frac{\mathrm{e}^{-\lambda_{c}\cdot\Delta}(\lambda_{c}\cdot\Delta)^{i}}{i!} (i=0,1,2,...)     
	\label{equ:poi}
\end{equation}

We assume that passengers waiting at a pick-up point for the upcoming taxicab will not be picked up by the next arrival taxicab. In other words, if a passenger $p$ arrives at time $t_{p}$ and two taxicabs $i$ and $j$ arrive at $t_{i}$ and $t_{j}$ ($t_{p}<t_{i}<t_{j}$), then taxicab $j$ cannot pick up passenger $p$. Therefore, we can reformulate the probability that a taxicab arriving at $c$ picks up passengers after an interval $\Delta$ since the last taxicab comes. It can be computed as:
\begin{equation}
	\small
	\begin{split}
	P(c,\Delta) &= \sum_{i=1}^{\infty} \frac{\mathrm{e}^{-\lambda_{c}\cdot\Delta}(\lambda_{c}\cdot\Delta)^{i}}{i!} \\
					&= 1-\mathrm{e}^{-\lambda_{c}\cdot\Delta}
	\end{split}
\end{equation}

Note that if a taxicab is the first one to visit the pick-up point at time $t$, we have $\Delta = t$. Specially, $P(0,\Delta)=0$ for any $\Delta$.

As the probability is related to time in CMSR, the distance measurement $D(c,c^{'})$ is replaced by $T(c,c^{'})$ to indicate the traveling time between pick-up points $c$ and $c^{'}$. In practice, $T(c,c^{'})$ can be discretized into an integer, e.g., a second, for ease of computation. Now, consider recommending a collection of $K$ routes of length $L$ to taxicabs at the same area (point 0) and at the same time (time 0). A possible recommendation $\mathcal{R}$ is an \emph{ordered multiset} of routes from $\{\overrightarrow{R_{1}},\overrightarrow{R_{2}},...,\overrightarrow{R_{K}}\}$ chosen from $\mathbb{R}^{L}$. The ordered set is helpful in the following scenario: if two taxicabs $\overrightarrow{R_{i}}$ and $\overrightarrow{R_{j}}$ $(i<j)$ arrive at a pick-up point at the same time, we assume that the taxicab of $\overrightarrow{R_{i}}$ arrives earlier.

We use $\mathcal{F}(\mathcal{R})$ to denote the sum of \textbf{potential travel time (PTT)} of $K$ routes. The CMSR can be formalized as follows:
	\begin{framed}
		\noindent \textbf{The CMSR problem} \\
		\textbf{Given:} A set of pick-up points $\mathcal{C}$ of size $N$, a set $\mathbb{R}^{L}$ of all possible routes of length $L$, $\lambda_{c}$ for each pick-up point, the inital position 0 for $K$ taxicabs. \\ 
		\textbf{Objective:} Recommending an optimal set of driving routes  $\mathcal{R} (\overrightarrow{R_{k}} \in \mathbb{R}^{L})$ with the goal to minimize the sum of PTT:
		\begin{equation*}
		\min \mathcal{F}(\mathcal{R})
		\end{equation*}
	\end{framed}
	\begin{figure}
		\centering
		\includegraphics[width=0.75\linewidth]{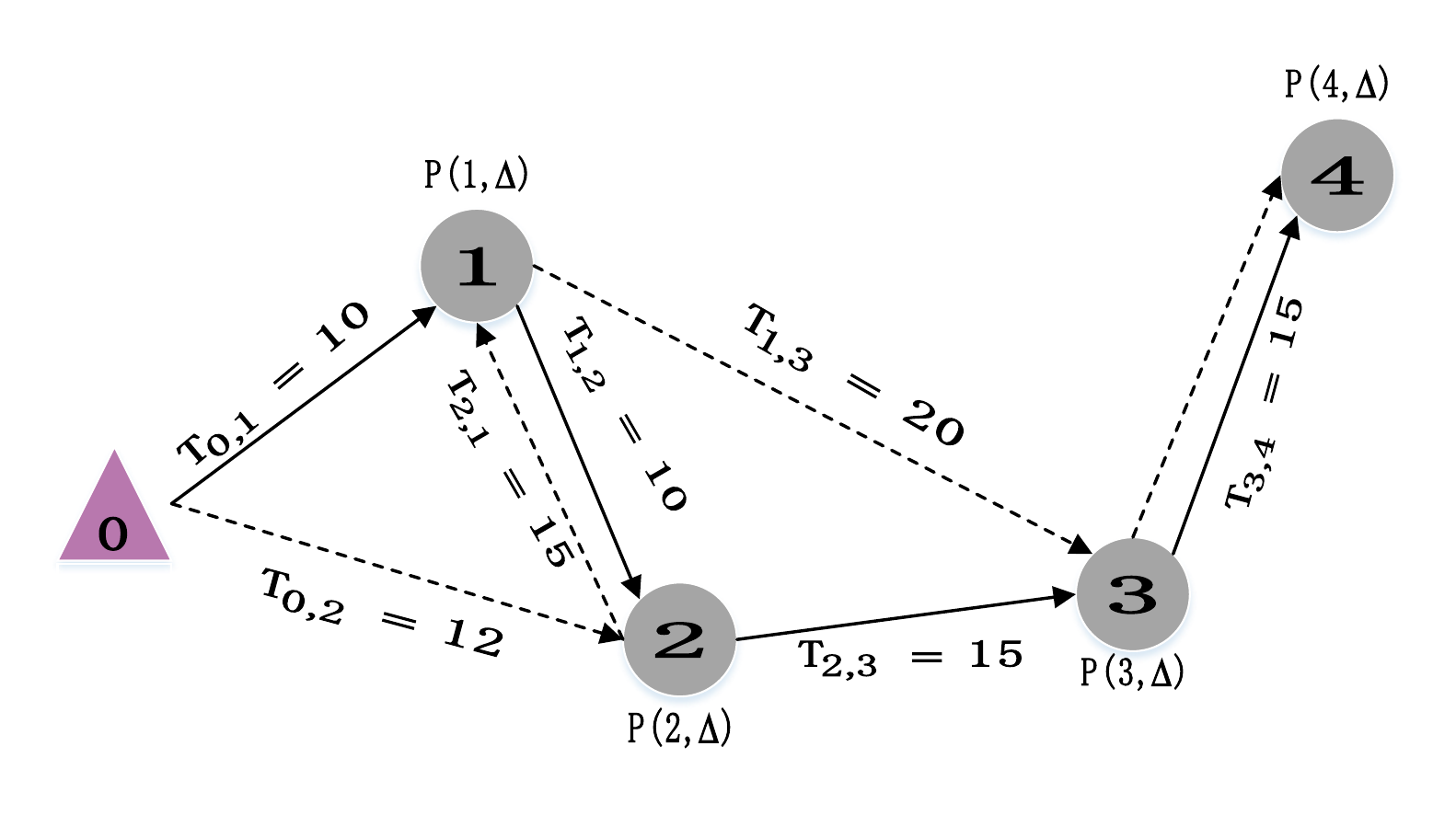}
		\caption{An illustrated example of CMSR.}
		\label{fig:ill}
	\end{figure}

The computation of $\mathcal{F}(\mathcal{R})$ is non-trivial. It is much more complex than computing a single taxicab's PTD in two aspects. First, the pick-up probability varies with the arrival time of taxicabs. Second, the distribution of cruising time of a route is correlated with the others.

	\paragraph{Example.}
	An illustration example is shown in \figurename~\ref{fig:ill}, in which two routes are recommended. The first route is represented by a solid line and the second one is represented by a dash line. As  shown, the two taxicabs will arrive at their corresponding first pick-up points $1$ and $2$. If either of them fails to pick-up a passenger, it will go ahead to its next pick-up point. In this case, the pick-up probability that the first taxicab ends up cruising at $2$ is $\overline{P(1,10)}P(2,8)$, the probability that second taxicab ends up cruising at $1$ is $\overline{P(2,12)}P(1,17)$. Next, consider the event that the second taxicab ends up cruising at $3$. If the first taxicab has picked up a passenger at $1$ or $2$, the conditional probability of the event is $\overline{P(2,12)}\ \overline{P(1,17)}P(3,47)$. Otherwise, if the first taxicab arrives at 3 before the second taxicab, the conditional probability is $\overline{P(2,12)}\ \overline{P(1,17)}P(3,12)$. Thus, the probability that second taxicab ends up cruising at $3$ is $(P(1,10)+\overline{P(1,10)}P(2,8))\overline{P(2,12)}\ \overline{P(1,17)} P(3,47) +(1-P(1,10)-\overline{P(1,10)}P(2,8)) \overline{P(2,12)}\ \overline{P(1,17)}P(3,12)$.
	
\section{Methodology}
In CMSR, the number of possible recommendations is $(\binom{N}{L}L!)^{K}$, which is exponentially increasing with $K$ and based on the number of all possible routes $\binom{N}{L}L!$ in MSR. This combinatorial problem requires extremely high computational resources.	To address this challenge, we propose an algorithm based on dynamic programming to efficiently calculate $\mathcal{F}$ for a given $\mathcal{R}$. Next, we propose a greedy recommendation algorithm to approximate the solution from a global perspective.

\subsection{Evaluating a Collective Recommendation}
We first give a straightforward approach for computing $\mathcal{F}$. By characterizing some useful properties among different recommendations, we can design a more efficient approach.
	
Consider a recommendation $\mathcal{R}$. Let $\textbf{u}=(u_{1},u_{2},...,u_{K})$ denote its possible outcome, where $u_{k}$ means that the taxicab of $\overrightarrow{R_{k}}$ ends up cruising at the $u_{k}^{th}$ pick-up point. In particular, $u_{k}=L+1$ indicates that the taxicab fails to pick-up passengers along the whole route. Thus, $u_{k}$ is larger than zero and less than or equal to the length of $\overrightarrow{R_{k}}$ plus one. We use $\mathbb{U}$ to denote the set of all possible outcomes. Let $p(\textbf{u})$ and $s(\textbf{u})$ be the probability that $\textbf{u}$ occurs and the total cruising time of $\textbf{u}$.

\paragraph{Example}
Take \figurename~\ref{fig:ill} as an example. $\mathbb{U}$ is $\{(i,j)| 1\le i,j \le 5 \}$. $\textbf{u}=(1,2)$ means that both  taxicabs pick up passengers at point 1. Then, $\textbf{p}((1,1)) = P(1,10)P(2,12)$, $\textbf{p}((2,1)) = \overline{P(1,10)}P(2,8)P(2,12)$, $\textbf{p}((2,2))=\overline{P(1,10)}P(2,8)\overline{P(2,12)}P(1,17)$, $\textbf{p}((2,3)) = \overline{P(1,10)}P(2,8)\overline{P(2,12)}\ \overline{P(1,17)}P(3,47)$ and $\textbf{s}((2,3)) = T(0,1)+T(1,2) +T(0,2)+T(2,1)+T(1,3)$.

According to the definition, $\mathcal{F}$ can be calculated by summing up the PTT of each route. However, because the cruising time of one route is dependent on the outcome of the other routes, we instead compute $\mathcal{F}$ by summing up the product of outcome event probability and the total cruising time of $\textbf{u}$ as below:
	\begin{equation}
		\small
		\mathcal{F}(\mathcal{R}) = \sum_{\textbf{u} \in \mathbb{U}}p(\textbf{u})*s(\textbf{u}) 
	\end{equation}
	
For ease of composition, we use $c_{k,u}$ and $t_{k,u}$ to denote the $u^{th}$ pick-up point of $\overrightarrow{R_{k}}$ and its corresponding arrival time, respectively. Based on the above discussions, we can form several tuples $(k,u,c_{k,u},t_{k,u})$ for a given recommendation $\mathcal{R}$. Note that if $u$ is equal to $L+1$, $c_{k,u}$ does not exist, and therefore such kind of tuples are not considered.

Given a certain $\textbf{u}$, we next consider how to calculate $p(\textbf{u})$ and $s(\textbf{u})$. $p(\textbf{u})$ can be computed by the joint probability that every event occurs. Here, an event is that a taxicab picks up passengers at a pick-up point or not. Because the occurred  events are known when $\textbf{u}$ is determined, we can obtain the time intervals between the arrival of adjacent taxicabs. With these time intervals, we can get the probability of each event. To be more specific, let $\Delta_{k,u}$ denote the time interval between the arrival of taxicab $k$ and the last taxicab visiting at $c_{k,u}$. For example, in Figure \ref{fig:ill}, if $\textbf{u}=(2,3)$, then $\Delta_{1,2} = 20-12$, $\Delta_{2,2} = 27-10$ and $\Delta_{2,3} = 47$.

We turn to seek the value of $\Delta_{k,u}$ for each tuple. Consider a particular pick-up point $c$. Find all the tuples $(k,u,c_{k,u},t_{k,u})$ with $c=c_{k,u}$ and $u \leq u_{k}$ (i.e., taxicab $k$ is still cruising when it reaches the $u$-th pick-up point $c$). Sort them by the value of $t_{k,u}$ primarily and the value of $k$ secondarily in ascending order. Then we can get an ordered sequence of tuples at point $c$. If the $p^{th}$ tuple is $(k,u,c,t)$ and the $(p-1)^{th}$ tuple is $(k',u',c,t')$, then $\Delta_{k,u}$ is equal to $t-t'$. Finally, $p(\textbf{u})$ and $s(\textbf{u})$ can be computed as follows:
\begin{equation}
	\small
		\begin{split}
		p(\textbf{u}) = \prod_{k=1}^{K} \big{(}\prod_{u=1}^{u_{k}-1}\overline{P(c_{k,u},\Delta_{k,u})} (P(c_{k,u_{k}},\Delta_{k,u_{k}})\\
		\mathbb{I}_{u_{k} \leq L}+\mathbb{I}_{u_{k} = L+1})\big{)}
		\end{split}
\end{equation}
\begin{equation}
	\small
		\begin{split}
		s(\textbf{u}) = \sum_{k=1}^{K} \big{(}\sum_{u=2}^{u_{k}-1} T(c_{k,u-1},c_{k,u})+T(0,c_{k,1}) + \\ T(c_{k,u_{k}-1},c_{k,u_{k}})\mathbb{I}_{u_{k} \leq L} + Penalty\mathbb{I}_{u_{k} = L+1}\big{)},
		\end{split}
\end{equation}	
 where $\mathbb{I}_{condition}$ is equal to one if the condition holds, and zero otherwise.

\subsubsection{A Straightforward Approach}
It is straightforward that we can compute $p(\textbf{u})$ and $s(\textbf{u})$ for each $\textbf{u}$ independently. The pseudo code is shown in Algorithm \ref{alg:2}. In this algorithm, we enumerate all possible outcomes and compute the probability and the total cruising time for each $\textbf{u}$. Lines 7-13 show the procedure of computing the probability $p(\textbf{u})$. The time complexity of this procedure is $\mathcal{O}(KL)$. Similarly, the cruising time can also be computed in $\mathcal{O}(KL)$ time complexity. The number of all possible outcomes is $\mathcal{O}((L+1)^K)$. Thus, the time complexity of this approach is $\mathcal{O}(KL(L+1)^K)$. 


	\begin{algorithm}[htb] 
		\small
		\caption{A Straightforward Approach} 
		\label{alg:2}  
		\begin{algorithmic}[1]  
			\Require A set of pick-up points $\mathcal{C}$, a probability set $\mathcal{P}$, initial position, the travel time matrix $T$ and a recommendation $\mathcal{R}$ in which $\textbf{u}=(u_{1},u_{2},...,u_{K})$
			\Ensure The sum of potential travel time
			\State Make tuples $(k,u,c_{k,u},t_{k,u})$ and sort them by $t$ (primary key) and $k$ (secondary key) in increasing order
			\State Create an array $\mathrm{last}$, $\mathrm{last}_{c}$ denotes the last visited time of $c$, the initial value is $0$
			\State $\mathrm{ans} = 0$
			\For{each $\textbf{u}$}
			\State Set $\mathrm{last}_{c}$ as 0 for every $c$
			\State $p = 1$
			\For{each tuple $(k,u,c,t) $ in order}
				\If{$u < u_{k}$}
				\State	$p=p*\overline{P(c,t-\mathrm{last}_{c})}$
				\EndIf
				\If{$u=u_{k}$}
				\State	$p=p*P(c,t-\mathrm{last}_{c})$
				\EndIf
				\If{$u\leq u_{k}$}
				\State	$\mathrm{last}_{c} = t$
				\EndIf
			\EndFor
			\State calculate the sum of cruising time $s$
			\State $\mathrm{ans} = \mathrm{ans} + p*s$
			\EndFor
			\State \Return ans
		\end{algorithmic}  
	\end{algorithm}  
	
\subsubsection{An Improved Approach}
The straightforward approach is not efficient enough. Observing that the computation of different $\textbf{u}$ has a lot of common parts, the overlapping subproblems can be calculated just once and reused multiple times by recursive equations. Therefore, we can design an algorithm based on dynamic programming to accelerate the computation of $p(\textbf{u})$ and $s(\textbf{u})$. 
	
Consider a more general scenario where the route lengths of a recommendation are not necessarily the same (but still no greater than $L$). In the following, we show how different recommendations correlate to the same outcome $\textbf{u}$. 
To do so, it is necessary to extend the previous notations. Let $p_{\mathcal{R}}(\textbf{u})$ and $s_{\mathcal{R}}(\textbf{u})$ respectively denote the probability and the total cruising time for recommendation $\mathcal{R}$.

\begin{theorem}[SEQUENTIAL EQUATION FOR $p$]
		Let $\mathcal{R}$ and $\mathcal{R^{'}}$ be two recommendations which only differ in the $q^{th}$ route. Suppose that $\mathcal{R}=\{\overrightarrow{R_{1}},\overrightarrow{R_{2}},...,\overrightarrow{R_{K}}\}$ and the length of $\overrightarrow{R_{k}}$ is $l_{k}$. The $q^{th}$ route of $\mathcal{R}$ is $(c_{1},c_{2},...,c_{l-1})$ and the $q^{th}$ route of $\mathcal{R^{'}}$ is $(c_{1},c_{2},...,c_{l})$. And let $t_{\mathrm{end}}$ be the arrival time when the taxicab of $\overrightarrow{R'_{q}}$ arrives at $c_{l}$.
		  
		Assume that $t_{end}$ satisfies the following condition:
		\begin{equation}
		\small
		\begin{split}
			t_{k,u}<t_{\mathrm{end}} \ or\  (t_{k,u}=t_{\mathrm{end}} \ and\  k<q), \\ 
			\text{$\forall~k \ne q$ and $c_{k,u}=c_{l}$} 
		\end{split}
		\label{equ:con}
		\end{equation}

		In other words, the latest visiting time of $c_{l}$ is $t_{\mathrm{end}}$ by the $q^{th}$ route. Other routes either visit $c_l$ before $t_{end}$ or not visit $c_l$.	Then $p_{\mathcal{R^{'}}}(\textbf{u})$ satisfies the following equation:
		\begin{equation}
		\small
			p_{\mathcal{R^{'}}}(\textbf{u}) =
			\begin{cases}
			p_{\mathcal{R}}(\textbf{u}), & 0 < u_{q} < l \\
			p_{\mathcal{R}}(\textbf{u})P(c_{l},\Delta_{q,l}),  & u_{q}=l  \\
			p_{\mathcal{R}}(\textbf{u}-\textbf{i}_{q})\overline{P(c_{l},\Delta_{q,l})} & u_{q} = l+1,  \\
			\end{cases}
			\label{equ:seq}
		\end{equation}
		 where  $\textbf{i}_{q}$ denotes the vector where all the elements are 0 except that the $q^{th}$ element is 1.
		\label{theo:seq}
\end{theorem}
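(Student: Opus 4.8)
The plan is to compare the two recommendations through the sorted tuple sequences that define the intervals $\Delta_{k,u}$, since the probability of an outcome is, by its defining product formula, just a product of terms $\overline{P(c_{k,u},\Delta_{k,u})}$ and $P(c_{k,u_k},\Delta_{k,u_k})$. The only structural difference between $\mathcal{R}$ and $\mathcal{R}'$ is that route $q$ acquires one extra point $c_l$ at its end; consequently the arrival times $t_{q,1},\dots,t_{q,l-1}$ and every tuple of every route $k\neq q$ are literally the same in $\mathcal{R}$ and $\mathcal{R}'$, and the only new tuple is $(q,l,c_l,t_{\mathrm{end}})$. By the definition of the relevant tuple set (it keeps only tuples with $u\le u_k$), this new tuple enters the computation exactly when $l\le u_q$, i.e.\ when $u_q=l$ or $u_q=l+1$. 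Hence when $0<u_q<l$ the sorted sequence at every pick-up point is unchanged, every $\Delta_{k,u}$ is unchanged, route $q$'s factor is the same ``pick-up at the $u_q$-th point'' term in both cases (it involves only $c_1,\dots,c_{u_q}$), and therefore $p_{\mathcal{R}'}(\textbf{u})=p_{\mathcal{R}}(\textbf{u})$ --- the first case.

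For the remaining two cases I would first establish the key invariant: when $l\le u_q$, condition~(\ref{equ:con}) together with the fact that the points inside route $q$ are pairwise distinct implies that $(q,l,c_l,t_{\mathrm{end}})$ is the \emph{last} element of the sorted tuple sequence at $c_l$ --- every competing visit to $c_l$ either precedes $t_{\mathrm{end}}$ strictly or ties it with a smaller route index, and route $q$ contributes no other tuple at $c_l$ --- while the sequences at all points $c\neq c_l$ stay untouched. A tail insertion changes no existing tuple's predecessor, so every previously defined $\Delta_{k,u}$ is unchanged and the only new interval is $\Delta_{q,l}$. Granting this, the two cases are a direct substitution into the defining formula for $p$, keeping in mind that route $q$ has length $l-1$ in $\mathcal{R}$ and length $l$ in $\mathcal{R}'$. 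If $u_q=l$: route $q$'s factor is the ``fails along the whole length-$(l-1)$ route'' product $\prod_{u=1}^{l-1}\overline{P(c_u,\Delta_{q,u})}$ in $\mathcal{R}$, whereas in $\mathcal{R}'$ it becomes that same product times $P(c_l,\Delta_{q,l})$ (taxicab $q$ now picks up at $c_l$), so $p_{\mathcal{R}'}(\textbf{u})=p_{\mathcal{R}}(\textbf{u})\,P(c_l,\Delta_{q,l})$. If $u_q=l+1$: compare with the outcome $\textbf{u}-\textbf{i}_q$ in $\mathcal{R}$, whose $q$-th coordinate $l$ is again the ``fail'' outcome for the length-$(l-1)$ route; route $q$'s factor then acquires exactly one extra factor $\overline{P(c_l,\Delta_{q,l})}$, giving the third case. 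In all three cases the product over $k\neq q$ is unchanged, because the insertion disturbs none of those routes' intervals.

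The step I expect to be the main obstacle is the tail-insertion invariant: one must argue with care that inserting $(q,l,c_l,t_{\mathrm{end}})$ into the sorted sequence at $c_l$ leaves every other tuple's $\Delta$ value intact. This is precisely what condition~(\ref{equ:con}) is designed for --- in the ``primary key $t$, secondary key $k$'' ordering it forces the new tuple strictly to the tail, so the insertion is a pure append; the distinctness of the points within route $q$ rules out a second route-$q$ tuple at $c_l$, and points $c\neq c_l$ are trivially unaffected. Once this is in place, everything else is bookkeeping against the product formula for $p$.
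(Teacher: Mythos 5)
Your proposal is correct and follows essentially the same route as the paper's proof: identify the single new tuple $(q,l,c_{l},t_{\mathrm{end}})$, use condition~(\ref{equ:con}) to show it is appended at the tail of the sorted sequence at $c_{l}$ so that all existing intervals $\Delta_{k,u}$ are preserved and only $\Delta_{q,l}$ is new, and then substitute into the product formula for $p$ case by case. Your write-up is in fact somewhat more explicit than the paper's about why the tail insertion leaves every other tuple's predecessor (and hence every other $\Delta$) intact.
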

\begin{proof}
		
		Note that tuples $(k,u,c_{k,u},t_{k,u})$ in $\mathcal{R}$ are also included in $\mathcal{R^{'}}$, but $\mathcal{R^{'}}$ contains one more tuple $(q,l,c_{l},t_{\mathrm{end}})$.
		
		For the first case in Equation (\ref{equ:seq}), the outcome $\textbf{u}$ reflects the same cruising routes for both $\mathcal{R}$ and $\mathcal{R^{'}}$. Thus their probabilities are equal.

		For the second case, the outcome of $\mathcal{R^{'}}$ is slightly different from the outcome of $\mathcal{R}$. The taxicab of $\overrightarrow{R'_{q}}$ continues its driving to $c_{l}$ after it fails to pick up passengers at the last pick-up point $c_{l-1}$ of $\overrightarrow{R_{q}}$, but the taxicab of $\overrightarrow{R_{q}}$ has already ended its route and a penalty has been added to the total cruising time. Observe that whenever the taxicab of $\overrightarrow{R'_{q}}$ arrives at $c_{l}$, it is the last one to visit according to (\ref{equ:con}). Therefore, $\mathcal{R}$ and $\mathcal{R^{'}}$ share the same time intervals. Only $\Delta_{q,l}$ needs to be figured out and appended to $\mathcal{R^{'}}$. 
		Then we have the following equation:
		\begin{equation}
		\small
			\begin{split}
			p_{\mathcal{R^{'}}}(\textbf{u})
			&= \prod_{k=1,k \ne q}^{K} \big{(} \prod_{u=1}^{u_{k}-1}\overline{P(c_{k,u},\Delta_{k,u})} (P(c_{k,u_{k}},\Delta_{k,u_{k}})\mathbb{I}_{u_{k} \leq l_{k}} \\ 
			&\quad +\mathbb{I}_{u_{k} = l_{k}+1})\big{)} \prod_{u=1}^{l-1}\overline{P(c_{q,u},\Delta_{q,u})}P(c_{l},\Delta_{q,l})\\
			&= p_{\mathcal{R}}(\textbf{u})  P(c_{l},\Delta_{q,l})\\
			\end{split}
		\end{equation}
		The proof of the third case is similar to the second case.
\end{proof}

\paragraph{Example} We take Figure \ref{fig:ill} as an example. The recommendation shown in the figure is $\{(1,2,3,4), (2,1,3,4)\}$. Consider another two recommendations with their lengths not necessarily equal to $L$: $\mathcal{R}=\{(1,2,3),(2,1)\}$ and $\mathcal{R^{'}}=\{(1,2,3),(2,1,3)\}$.
\begin{itemize}
\item If $\textbf{u}=(2,2)$, then $p_{\mathcal{R^{'}}}(\textbf{u})=p_{\mathcal{R}}(\textbf{u})=\overline{P(1,10)}P(2,8)\overline{P(2,12)}P(1,17)$.
\item If $\textbf{u}=(2,3)$, then $p_{\mathcal{R^{'}}}(\textbf{u})=p_{\mathcal{R}}(\textbf{u})P(3,47)=\overline{P(1,10)}P(2,8)\overline{P(2,12)}\ \overline{P(1,17)}P(3,47)$.
\item If $\textbf{u}=(2,4)$, then $p_{\mathcal{R^{'}}}(\textbf{u})=p_{\mathcal{R}}((2,3))\overline{P(3,47)}$.
\end{itemize}
	
	\begin{theorem}[SEQUENTIAL EQUATION FOR $s$]
		Under the same condition in Theorem \ref{theo:seq}, $s$ satisfies the following equation:
		\begin{equation}
			\small
			s_{\mathcal{R^{'}}}(\textbf{u}) =
			\begin{cases}
			s_{\mathcal{R}}(\textbf{u}), & 0 < u_{q} < l \\
			s_{\mathcal{R}}(\textbf{u})-Penalty+T(c_{l-1},c_{l}),  & u_{q}=l  \\
			s_{\mathcal{R}}(\textbf{u}-\textbf{i}_{q})+T(c_{l-1},c_{l}) & u_{q} = l+1\\
			\end{cases}
		\end{equation}
		\label{theo:seq2}
	\end{theorem}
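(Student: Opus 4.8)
The plan is to follow the same template as the proof of Theorem~\ref{theo:seq}, but to start from the additive expression for $s(\textbf{u})$ rather than the multiplicative one for $p(\textbf{u})$. The key structural remark is that $s(\textbf{u})$, unlike $p(\textbf{u})$, does not depend on any arrival times or time intervals at all: the cruising time contributed by taxicab $k$ is completely determined by the ordered list of pick-up points it actually traverses, which in turn depends only on the route $\overrightarrow{R_{k}}$ and the component $u_{k}$. Hence, since $\mathcal{R}$ and $\mathcal{R^{'}}$ agree on every route except the $q^{th}$, all summands of the formula for $s$ indexed by $k\neq q$ are literally identical in $s_{\mathcal{R}}(\textbf{u})$, in $s_{\mathcal{R^{'}}}(\textbf{u})$, and in $s_{\mathcal{R}}(\textbf{u}-\textbf{i}_{q})$ (the last because $\textbf{u}$ and $\textbf{u}-\textbf{i}_{q}$ share all components other than $q$). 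So it suffices to compare the single summand $\sigma(\textbf{u})$ produced by route $q$. (The time-ordering hypothesis~(\ref{equ:con}) inherited from Theorem~\ref{theo:seq} is not actually used here; it is kept only so that the two sequential equations hold under one common set of assumptions.)

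The one point requiring care is the shift in the meaning of the index $u_{q}$ between the two recommendations. In $\mathcal{R}$ the $q^{th}$ route is $(c_{1},\dots,c_{l-1})$, so $u_{q}=l$ is the whole-route-failure outcome, with route-$q$ cruising time $T(0,c_{1})+\sum_{u=2}^{l-1}T(c_{u-1},c_{u})+Penalty$; in $\mathcal{R^{'}}$ the $q^{th}$ route is $(c_{1},\dots,c_{l})$, so $u_{q}=l$ now means a genuine pick-up at $c_{l}$, with route-$q$ time $T(0,c_{1})+\sum_{u=2}^{l-1}T(c_{u-1},c_{u})+T(c_{l-1},c_{l})$, while $u_{q}=l+1$ is the failure outcome. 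With this in hand the three cases are immediate. If $0<u_{q}<l$, taxicab $q$ stops at its $u_{q}^{th}$ pick-up point, and since the first $l-1$ points of the two routes coincide the traversed path $0\to c_{1}\to\dots\to c_{u_{q}}$ is the same, giving $s_{\mathcal{R^{'}}}(\textbf{u})=s_{\mathcal{R}}(\textbf{u})$. If $u_{q}=l$, the $\mathcal{R}$-outcome pays $Penalty$ after $c_{l-1}$ while the $\mathcal{R^{'}}$-outcome instead drives the leg $c_{l-1}\to c_{l}$ and picks up, so $\sigma_{\mathcal{R^{'}}}(\textbf{u})=\sigma_{\mathcal{R}}(\textbf{u})-Penalty+T(c_{l-1},c_{l})$, and adding the unchanged $k\neq q$ terms gives the second line. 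If $u_{q}=l+1$, taxicab $q$ fails along the full length-$l$ route, whose route-$q$ cost exceeds that of the length-$(l-1)$ failure outcome $\textbf{u}-\textbf{i}_{q}$ under $\mathcal{R}$ by exactly $T(c_{l-1},c_{l})$ (the trailing $Penalty$ being common to both), which yields the third line.

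I expect no genuine obstacle: the whole argument is linear bookkeeping of travel-time legs, parallel to the case analysis in the proof of Theorem~\ref{theo:seq}. The only place one can slip is the re-indexing in the last two cases --- remembering that $u_{q}=l$ denotes a pick-up in $\mathcal{R^{'}}$ but a whole-route failure in $\mathcal{R}$, and consequently that the $Penalty$ term must be dropped (Case~2) or inherited from the outcome $\textbf{u}-\textbf{i}_{q}$ (Case~3). Everything else is routine.
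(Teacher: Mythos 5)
Your argument is correct and is precisely the case analysis the paper gestures at when it omits the proof of Theorem~\ref{theo:seq2} as ``similar to Theorem~\ref{theo:seq} but simpler'': isolate the $q^{th}$ summand of the additive formula for $s$, note the $k\neq q$ terms are unchanged, and track the re-indexing of $u_q$ between route lengths $l-1$ and $l$, including the swap between the $Penalty$ term and the leg $T(c_{l-1},c_l)$. Your added observation that the timing hypothesis (\ref{equ:con}) is not actually needed for $s$ (since $s(\textbf{u})$ depends only on the traversed legs, not on arrival intervals) is accurate and goes slightly beyond what the paper states.
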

The proof of Theorem \ref{theo:seq2} is similar to Theorem \ref{theo:seq}. The calculation of cruising time is even simpler, so we omit the proof here.

Theorem \ref{theo:seq} and \ref{theo:seq2} characterize the sequential relation of outcomes between two recommendations. Based on sequential equations, we can calculate $p$ and $s$ sequentially for a series of auxiliary recommendations, and finally achieve the evaluation of recommendation $\mathcal{R}$. Specifically, sort all tuples $(k,u,c_{k,u},t_{k,u})$ for $\mathcal{R}$ by first $t_{k,u}$ and then $k$ in increasing order. A prefix of sorted tuple sequence forms an auxiliary recommendation. It is obvious that the consecutive recommendations satisfy the condition in Theorem \ref{theo:seq}. The pseudo code for the improved approach, called a sequential evaluation approach, is presented in Algorithm \ref{alg:3}.

	\begin{algorithm}[htb]  
		\small
		\caption{A Sequential Evaluation Approach} 
		\label{alg:3}  
		\begin{algorithmic}[1]  
			\State Create two $K$-dimensional arrays $p$ and $s$, one-dimensional array $l$
			\State Set initial values of $p,s,l$ as 0
			\State Set initial outcome $\textbf{u}=(1,1,...,1)$
			\State $p(\textbf{u}) = 1$; $s(\textbf{u}) = K \cdot Penalty$
			\State Make tuples $(k,u,c_{k,u},t_{k,u})$ and sort them by $t$ and then $k$ in increasing order
			\For{each tuple $(k,u,c,t)$ in order}
			\State Create variable $last$ denoting the last visit time of $c$ before taxicab of $\overrightarrow{R_{k}}$ arrives at $c$, initial value is 0
			\For{each $\textbf{u}$ satisfies $u_{k}=l_{k}+1$ and $u_{j} \leq l_{j} +1 (j \ne k)$ }
			\State Update $last$
			\State $p(\textbf{u}+\textbf{i}_{k}) = p(\textbf{u}) *(1- P(c,t-last))$
			\State $p(\textbf{u}) = p(\textbf{u}) * P(c,t-last)$
			\State $s(\textbf{u}+\textbf{i}_{k}) = s(\textbf{u}) + T(c_{k,u-1},c)$
			\State $s(\textbf{u}) = s(\textbf{u}) -Penalty + T(c_{k,u-1},c)$
			\EndFor
			$l_{k} = l_{k} + 1$ 
			\EndFor
			\State $ans = 0$
			\For{\textbf{each} $\textbf{u} \in \mathbb{U}$}
			\State $ans = ans + p(\textbf{u})*s(\textbf{u})$
			\EndFor
			\State \Return $ans$
		\end{algorithmic}  
	\end{algorithm}  
	\paragraph{Complexity Analysis} 
	In the inner loop (lines 8--13), every time when $\textbf{u}$ gets accessed, a new outcome never accessed $\textbf{u}+\textbf{i}_{k}$ becomes accessed. The number of all outcomes is $\mathcal{O}((L+1)^K)$ and the maintenance of variable $last$ is easily implemented in $\mathcal{O}(1)$ time. So the whole complexity of outer and inner loop is $\mathcal{O}((L+1)^{K})$. The loop of $\textbf{u}$ in lines 15--16 is $\mathcal{O}((L+1)^K)$. Thus, the time complexity of algorithm \ref{alg:3} is $\mathcal{O}((L+1)^{K})$. It achieves a lower complexity than Algorithm \ref{alg:2} with a significant margin.

\subsection{Seeking a Collective Recommendation}
In \S 3.1, we see that the complexity of evaluating a possible recommendation grows exponentially with the number of taxicabs $K$. Finding an optimal recommendation of CMSR is even challenging. Note that powerful heuristic approaches such as tabu search and genetic algorithm for combinatorial problems are not suitable for CMSR, as search operators would perform a lot of times and involve many recommendation evaluations. In order to solve the problem within a reasonable time, we propose a greedy algorithm to approximate the optimal solution. The basic idea is as follows. If we want to append a pick-up point to the end of an incomplete route with its length smaller than $L$, we can find the best choice by enumerating and evaluating all the combinations of the pick-up points and routes. In this way, by adding pick-up points iteratively, we can finally obtain a feasible recommendation. 

The proposed greedy recommendation approach is shown in Algorithm \ref{alg:4}, which is a concise framework. In line 8, it generates one candidate recommendation by appending one pick-up point to one route. Note that the candidate recommendation and its previous recommendation do not need to satisfy the condition in Theorem \ref{theo:seq} and \ref{theo:seq2}. In line 9, it evaluates a new recommendation by simply calling Algorithm \ref{alg:3}. The algorithm terminates when all the routes are of length $L$.

	\begin{algorithm}[htb]  
	\small
		\caption{Greedy Recommendation for CMSR} 
		\label{alg:4}  
		\begin{algorithmic}[1]  
			\Require A set of pick-up points $\mathcal{C}$, a probability set $\mathcal{P}$, initial position, the length $L$, the travel time matrix $T$ and the number of taxicabs $K$
			\Ensure A recommendation $\mathcal{R}$
			\State $\mathcal{R} = (\overrightarrow{R_{1}},\overrightarrow{R_{2}},...,\overrightarrow{R_{K}})$, where $\overrightarrow{R_{k}} = \emptyset$
			\Repeat
			\State $min = \infty$; 
			\For{$j=1$ to $K$}
			\If{length of $\overrightarrow{R_{j}}$ is less than $L$}
			\For{$i=1$ to $N$}
			\If{point $i$ is not in  $\overrightarrow{R}_{j}$}
			\State Let $\mathcal{R}^{'}$ be the candidate recommendation that $i$ is appended to $\overrightarrow{R}_{j}$
			\State Evaluate $\mathcal{R^{'}}$
			\If{$\mathcal{F}(\mathcal{R^{'}}) < min$}
			\State $min = \mathcal{F}(\mathcal{R^{'}})$
			\State $\mathcal{R}_{\text{min}} = \mathcal{R}^{'}$
			\EndIf
			\EndIf
			\EndFor
			\EndIf
			\EndFor
			\State	$\mathcal{R} = \mathcal{R}_{\text{min}}$
			\Until{$\overrightarrow{R_{k}} \in \mathbb{R}^{L}$ for every $k$}
			\State \Return $\mathcal{R}$
		\end{algorithmic}  
	\end{algorithm} 
	\paragraph{Complexity Analysis} 
	There are a total of $KL$ pick-up points to be appended into $\mathcal{R}$. For each addition, there are $\mathcal{O}(NK)$ candidates. The evaluation can be implemented by Sequential Evaluation in $\mathcal{O}((L+1)^{K})$. As a result, the overall complexity of Algorithm \ref{alg:4} is $\mathcal{O}(NK^{2}L(L+1)^{K})$.

\section{Experiments}

	\begin{figure*}
		\subfigure[A Comparison of the sum of PTT on real-world data ($N=25$)]{
			\includegraphics[width=0.32\linewidth]{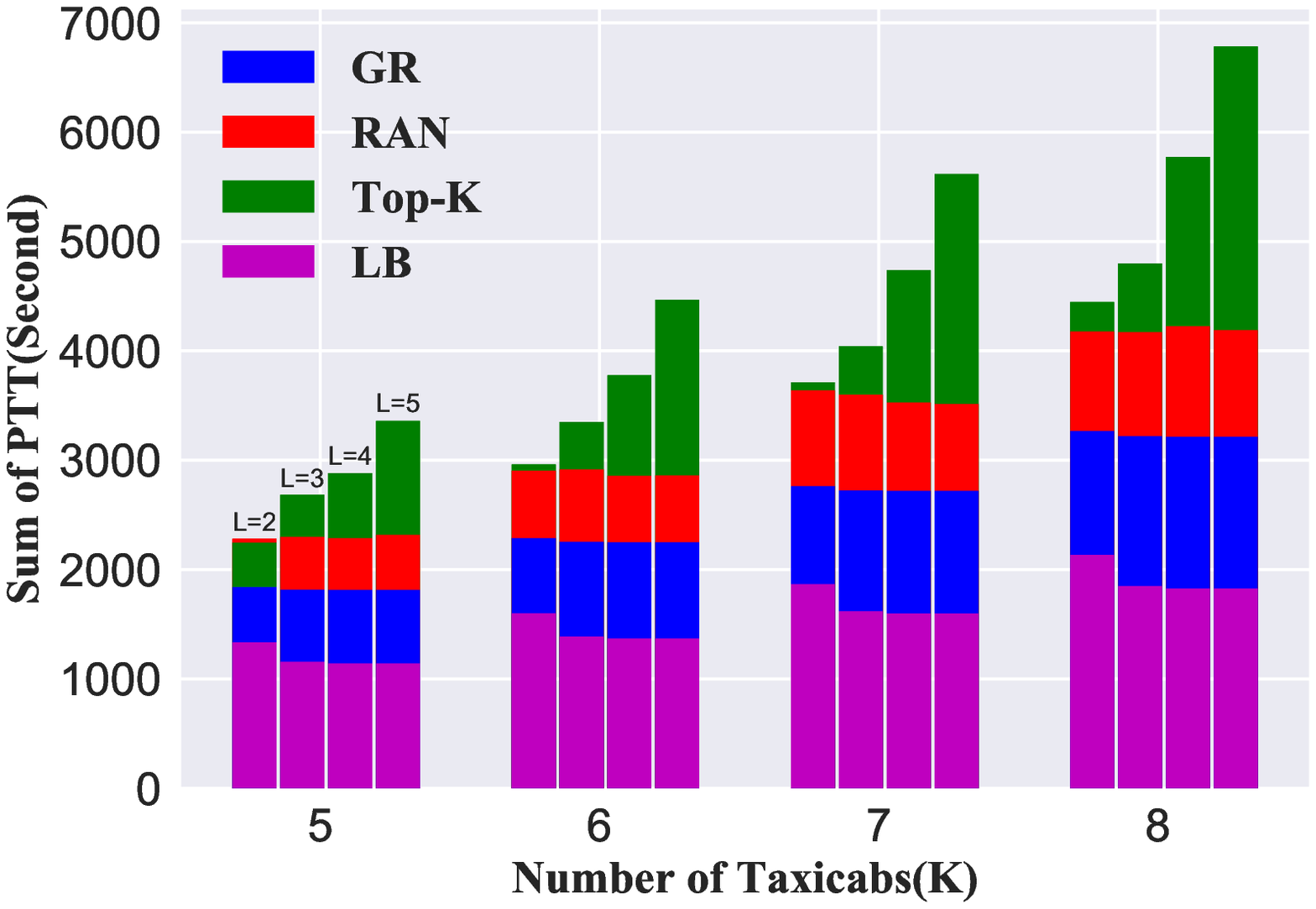}
			\label{fig:sim_epc}
		}
		\subfigure[A Comparison of the sum of cruising time]{
			\includegraphics[width=0.305\linewidth]{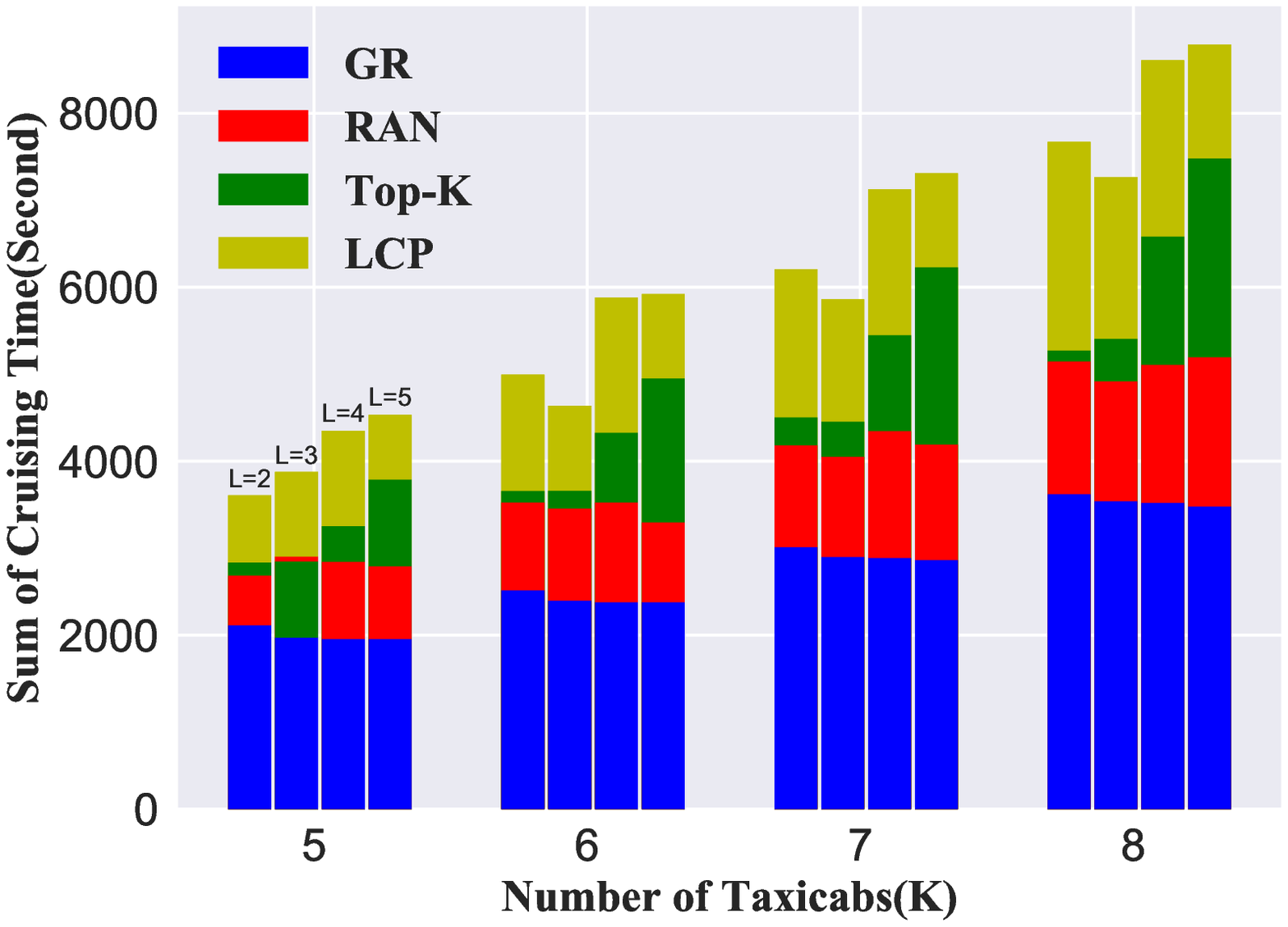}
			\label{fig:sim_time}
		}
		\subfigure[A Comparison of the number of passengers picked up]{
			\includegraphics[width=0.3\linewidth]{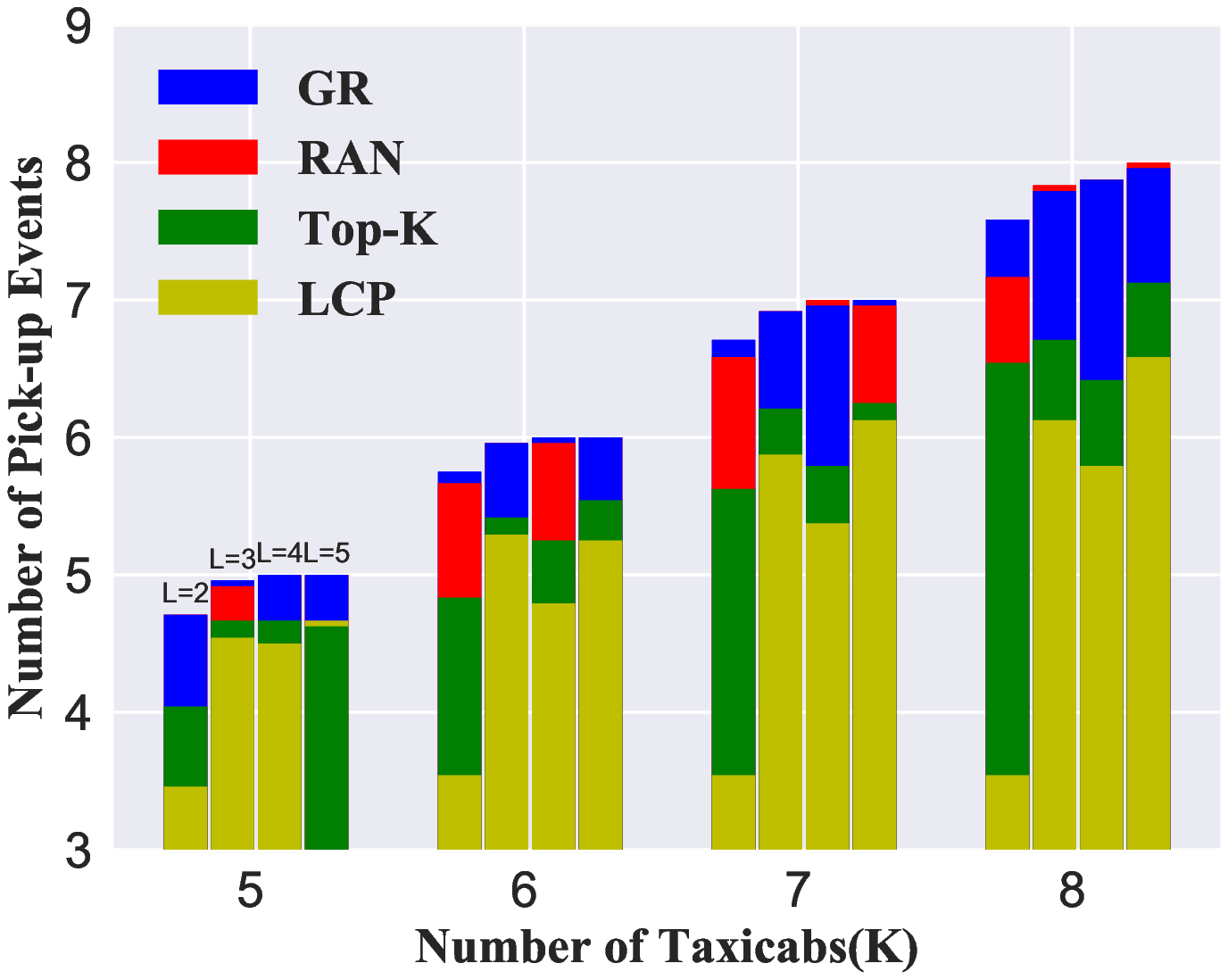}
			\label{fig:sim_num}
		}
		\caption{Experimental Results}
	\end{figure*}
%

Our experiments are conducted on a PC with an Intel Dual i7-4720 processor and 16 GB RAM. All the algorithms are implemented in C++11. The results of $\mathcal{F}$ and computation time by different approaches are compared. Based on real-world trace data, we demonstrate the effectiveness of our recommended routes in practice.

\paragraph{Real-world Data}
We adopt taxicab traces data set which is provided by Exploratorium - the museum of science, art and human perception \cite{crawdad}. The data set contains GPS records of 514 taxi drivers collected during 24 days in San Francisco Bay Area. Each record has four attributes: longitude, latitude, occupancy and timestamp. Based on historical data, we can extract a class of pick-up events. In the data processing, we extract records within the time period 6:00 PM-6:30 PM and find pick-up events within this time period to generate some clusters (by using the DBSCAN method).
	Subsequently, for each cluster $c$, we calculate the time intervals between two consecutive events and get an interval set $\{t_{j}\}$. We adopt an unbiased estimation for passenger arrival rate $\lambda_{c}$ as follows:
	\begin{equation}
	\small
	\lambda_{c} = \frac{n-1}{\sum t_{j}} ,\ (n=\left|\{t_{j}\}\right|)
	\end{equation}
	Afterwards, we represent pick-up points with centroids of clusters and use Google Map API to calculate the driving time between them. The start point $0$ for taxicabs is selected randomly on Google Map. In order to utilize trajectories comprehensively, we use cross-validation. Specifically, for each day, we use the other 23 days' data to estimate passenger arrival rates for pick-up points so that we can get 24 CMSR instances in total.
	
	
	\paragraph{Parameter Setting} 
	The maximum route length $L$ is set to 5, the same as \cite{Ge2010An}, which can restrict the maximum cruising time of driving route into a reasonable range. Due to an explosive increase of computation time, the maximum value of $K$ is set to 8. 
	$Penalty$ is set as an average value of traveling time matrix $T$, i.e. $764.3$ seconds.
	
	\paragraph{Compared Methods}
	\begin{table}[]
		\caption{Some Acronyms and Time Complexity}
  \addtolength{\tabcolsep}{-3.3pt}
		\centering
		\small
		\begin{tabular}{lll}
			\hline
			Acronym & Approach & Time Complexity\\ 
			\hline
			GR & Greedy Recommendation& $\mathcal{O}(NK^{2}L(L+1)^{K})$\\
			Top-K & Top-K Recommendation & $\mathcal{O}(LN^{L})$\\
			LB & Lower Bound for CMSR & $\mathcal{O}(LN^{L})$\\
			RAN & Random Selection of Routes& $\mathcal{O}(KL)$\\
			LCP & LCP Search \cite{Ge2010An} & $\mathcal{O}(LN^L)$\\
			\hline
			SA & Straightforward Approach& $\mathcal{O}(KL(L+1)^K)$\\
			SE & Sequential Evaluation& $\mathcal{O}((L+1)^K)$\\
			\hline
		\end{tabular}
		\label{tab:acr}
	\end{table}
	
In \S 3, we introduce several algorithms under different scenarios, which are summarized in Table \ref{tab:acr}.
	GR is our proposed method.
	Top-K recommendation is a scheme that recommends top $K$ routes with the minimum PTT to $K$ drivers but not considering the influence among routes.
	LB is calculated by multiplying $K$ and the minimum PTT.
	RAN is a random approach. It does not help reduce the cruising time for taxicabs but can balance the distribution of routes inherently.
	LCP method is introduced in \cite{Ge2010An} to solve MSR with a length constraint given pick-up probabilities. It uses a Round-Robin method to assign several best routes (5 routes in their paper) to taxicabs. 
	We compare this method with ours in simulation.
	SA and SE are subroutines for evaluating a given recommendation.
	
	
	\paragraph{A Comparison of the Sum of PTT}

	We conduct experiments on real-world data sets where $N$ is 25.
	Since LCP is not invented for CMSR, it is excluded from this part of experiments.
	 The results are shown in Figure \ref{fig:sim_epc}. As we can see, GR outperforms RAN and Top-K with a significant margin and is within a reasonable range from lower bound. On average, GR can find a recommendation with a smaller sum of PTT than RAN and Top-K by 22.4\% and 38.8\%, respectively. Its average solution gap with respect to LB is 61.8\%.
	 
	In addition, for a particular $K$, there is no obvious trend of the solutions generated by GR, RAN and LB as $L$ increases. We give a brief but intuitive explanation. When $L$ increases, the possible cruising time increases as well. At the same time, the probability of each driver picking up a passenger also increases, which leads to less penalty. Therefore, the sum of PTT is not necessarily monotonous with $L$. 
	Out of our expectation, Top-K behaves the worst. This is due to the correlation of those best routes. To be more specific, sorting routes in ascending order of PTT, the top $K$ routes may share a long prefix sub-route. In this case, one taxicab is very likely to arrive at a pick-up point following another taxicab immediately, which dramatically results in a great deal of cruising time. As a result, when $L$ increases, the sum of PTT of Top-K increases more significantly  than that of other approaches.
	
	\paragraph{Trace-driven Simulation}
	In order to evaluate the performance of the proposed approaches in practice, we test them via simulation based on real-world data. To do so, we replay the pick-up events for each day and test the recommended routes by different approaches. If a taxicab arrives at a pick-up point and satisfies the pick-up condition of CMSR, its cruising time is determined. If a taxicab fails to pick up a passenger along the whole route, the cruising time from the starting to the end location plus $Penalty$ is computed. The average results of 24 CMSR instances are recorded.\par
	We show the sum of cruising time and the number of picked up passengers for $K$ taxicabs in Figure \ref{fig:sim_time} and \ref{fig:sim_num}, respectively. GR performs much better than RAN, Top-K and LCP. In addition, the sum of cruising time of GR increases less intensely by $K$ than that of other methods. This is because GR can well balance the driving efficiency and the competition among drivers.
	LCP performs even worse than Top-K for the following reasons. On the one hand, Top-K can recommend more different routes than LCP. On the other hand, Top-K is built on CMSR model which characterizes the passenger pattern more accurately. In Figure \ref{fig:sim_num}, there is no red bar in some cases due to its overlap with blue bar, i.e., GR and RAN perform equally. 	
	\begin{table}[t]
		\caption{Running Time of GR with Two Evaluation Methods}
		\centering
		\small
		\begin{tabular}{ccccc}
			\hline
			$N$ & $K$ & $L$ & SE (second) & SA (second)\\
			\hline
			20 & 8 & 2 & 0.12 & 0.73\\
			20 & 8 & 3 & 1.16 & 11.02\\
			20 & 8 & 4 & 7.12 & 88.52\\
			20 & 5 & 5 &0.24 & 2.13\\
			20 & 6 & 5 &1.32 & 14.58\\
			20 & 7 & 5 &5.78 & 75.45\\
			10 & 8 & 5 &17.49 & 223.94\\
			15 & 8 & 5 &26.11 & 367.92\\
			20 & 8 & 5 & 32.58 & 492.46\\
			\hline
		\end{tabular}
		\label{tab:runtime}
	\end{table}

	\paragraph{A Comparison of Running Time for Two Evaluation Methods}
	Either SE or SA can be integrated into GR to evaluate a recommendation. To show the efficiency of SE compared with SA, we conduct several experiments on some synthetic data with different combinations of $N$, $K$ and $L$. The results are reported in Table \ref{tab:runtime}. It clearly shows that SE is about one order of magnitude faster than SA. 
	 
	\section{Conclusions and Future Work}
	There has been little effort dedicated to recommendation for a collection of taxicab drivers. In this paper, we propose CMSR, a collective mobile sequential recommendation, aimed to provide a set of routes to multiple taxicab drivers. The new metric of CMSR guarantees that the recommended routes can minimize the expected cruising time of taxicabs globally. In comparison with other methods, our method demonstrates its superior effectiveness and efficiency.

	In the future, we plan to explore more effective approximation by considering different passenger arrival/departure patterns and other advanced algorithms. Furthermore, we plan to extend CMSR to more practical problems which relax those hard restrictions, such as that a passenger missing one taxicab will not be picked up by next arriving taxicabs.
	
%
%

\bibliographystyle{named}

\begin{thebibliography}{1}
	\bibitem[\protect\citeauthoryear{Ge \bgroup \em et al.\egroup
	}{2010}]{Ge2010An}
	Yong Ge, Hui Xiong, Alexander Tuzhilin, Keli Xiao, Marco Gruteser, and
	Michael~J. Pazzani.
	\newblock An energy-efficient mobile recommender system.
	\newblock In {\em ACM SIGKDD International Conference on Knowledge Discovery
		and Data Mining}, pages 899--908, 2010.
	
	\bibitem[\protect\citeauthoryear{Huang \bgroup \em et al.\egroup
	}{2014}]{Huang2014Backward}
	Jianbin Huang, Xuejun Huangfu, Heli Sun, Hong Cheng, and Qinbao Song.
	\newblock Backward path growth for efficient mobile sequential recommendation.
	\newblock {\em IEEE Transactions on Knowledge and Data Engineering},
	27(1):46--60, 2014.
	
	\bibitem[\protect\citeauthoryear{Piorkowski \bgroup \em et al.\egroup
	}{2009}]{crawdad}
	Michal Piorkowski, Natasa Sarafjanovic-Djukic, and Matthias Grossglauser.
	\newblock Crawdad dataset epfl/mobility (v. 2009-02-24).
	\newblock \url{ http://crawdad.org/epfl/mobility/20090224}, 2009.
	
	\bibitem[\protect\citeauthoryear{Qian \bgroup \em et al.\egroup
	}{2015}]{Qian2015SCRAM}
	Shiyou Qian, Jian Cao, F~Mouël Le, Issam Sahel, and Minglu Li.
	\newblock Scram: A sharing considered route assignment mechanism for fair taxi
	route recommendations.
	\newblock In {\em ACM SIGKDD International Conference on Knowledge Discovery
		and Data Mining}, pages 955--964, 2015.
	
	\bibitem[\protect\citeauthoryear{Qu \bgroup \em et al.\egroup }{2014}]{Qu2014A}
	Meng Qu, Hengshu Zhu, Junming Liu, Guannan Liu, and Hui Xiong.
	\newblock A cost-effective recommender system for taxi drivers.
	\newblock In {\em ACM SIGKDD International Conference on Knowledge Discovery
		and Data Mining}, pages 45--54, 2014.
	
	\bibitem[\protect\citeauthoryear{Xun and Xue}{2011}]{Yun2011An}
	Yun Xun and Guangtao Xue.
	\newblock An online fastest-path recommender system.
	\newblock In {\em Knowledge Engineering and Management}, pages 341--348.
	Springer Berlin Heidelberg, 2011.
	
	\bibitem[\protect\citeauthoryear{Ye \bgroup \em et al.\egroup }{2018}]{Ye2018}
	Zeyang Ye, Lihao Zhang, Keli Xiao, Wenjun Zhou, Yong Ge, and Yuefan Deng.
	\newblock Multi-user mobile sequential recommendation: An efcient parallel
	computing paradigm.
	\newblock In {\em ACM SIGKDD International Conference on Knowledge Discovery
		and Data Mining}, pages 2624--2633, 2018.
	
	\bibitem[\protect\citeauthoryear{Yuan \bgroup \em et al.\egroup
	}{2013}]{Yuan2013T}
	Nicholas~Jing Yuan, Yu~Zheng, Liuhang Zhang, and Xing Xie.
	\newblock T-finder: A recommender system for finding passengers and vacant
	taxis.
	\newblock {\em IEEE Transactions on Knowledge and Data Engineering},
	25(10):2390--2403, 2013.
	
	\bibitem[\protect\citeauthoryear{Zhang and Tian}{2012}]{Zhang2012pCruise}
	Desheng Zhang and He~Tian.
	\newblock pcruise: Reducing cruising miles for taxicab networks.
	\newblock In {\em IEEE Real-time Systems Symposium}, 2012.
	
	\bibitem[\protect\citeauthoryear{Zheng \bgroup \em et al.\egroup
	}{2010}]{Zheng2010}
	Vincent~W. Zheng, Bin Cao, Yu~Zheng, Xing Xie, and Qiang Yang.
	\newblock Collaborative filtering meets mobile recommendation: A user-centered
	approach.
	\newblock In {\em Proceedings of the Twenty-Fourth AAAI Conference on
		Artificial Intelligence}, 2010.
	
	\bibitem[\protect\citeauthoryear{Zheng \bgroup \em et al.\egroup
	}{2011}]{where-to-find-my-next-passenger}
	Yu~Zheng, Xing Xie, Guangzhong Sun, Liuhang Zhang, Jing Yuan, and Nicholas~Jing
	Yuan.
	\newblock Where to find my next passenger?
	\newblock In {\em Proceedings of the 13th ACM International Conference on
		Ubiquitous Computing (Ubicomp 2011)}, September 2011.
	
\end{thebibliography}

\end{document}